\newtheorem{definition}{Definition}[section]
\newtheorem{theorem}[definition]{Theorem}
\newtheorem{lemma}[definition]{Lemma}
\newtheorem{corollary}[definition]{Corollary}
\newcommand{\N}{\mathbb{N}}
\newcommand{\Z}{\mathbb{Z}}
\newcommand{\rewardsum}{\sum_{i: A_i\subseteq S} a_i}
\newcommand{\penaltysum}{\sum_{j: B_j\cap S \neq \emptyset} b_j}
	\title{On Reward-Penalty-Selection Games}
\author{N. Gräf \and T. Heller\footnote{\texttt{till.heller@itwm.fraunhofer.de}, corresponding author} \and S.O. Krumke}
\begin{document}
           	
	\maketitle              
	
	\begin{abstract}
The Reward-Penalty-Selection Problem (RPSP) can be seen as a combination of the Set Cover Problem (SCP) and the Hitting Set Problem (HSP). Given a set of elements, a set of reward sets, and a set of penalty sets, one tries to find a subset of elements such that as many reward sets as possible are covered, i.e. all elements are contained in the subset, and at the same time as few penalty sets as possible are hit, i.e. the intersection of the subset with the penalty set is non-empty. In this paper we define a cooperative game based on the RPSP where the elements of the RPSP are the players. We prove structural results and show that RPS games are convex, superadditive and totally balanced. Furthermore, the Shapley value can be computed in polynomial time. In addition to that, we provide a characterization of the core elements as a feasible flow in a network graph depending on the instance of the underlying RPSP. By using this characterization, a core element can be computed efficiently.
	\end{abstract}

\section{Introduction}
For a company that specializes in the further processing of raw materials into various end products, two strategic questions are of particular interest. Firstly, which products should be produced and secondly, which materials should be purchased for them. In this case, different materials can be purchased together, whereby the full price is incurred as soon as only one of the materials from a package is required. On the other hand, end products can only be produced if all the materials required for them are available - only in this case can a profit be made. 

This problem can be formalized as the \emph{reward-penalty-selection problem} (RPSP) which was recently introduced in \cite{heller2021reward}. It can be viewed as a combination of two classical, well-known combinatorial problems, the set cover problem and the hitting set problem (cf. \cite{garey1979computers}). The ground set of elements~$N$ is given by the set of materials. The set of reward sets~$A\coloneqq \{A_i | A_i\subseteq N\}$ with corresponding rewards~$a_i\in\N$ and a set of penalty sets~$B\coloneqq \{B_i | B_i\subseteq N\}$ with corresponding penalties~$b_i\in\N$ are given by the end products and raw material packages with their respective price.  We say that a set is \emph{covered} if all elements of said set are chosen and that a set is \emph{hit} if at least one element is chosen. Now we aim to find a subset of elements~$S$ such that the profit function
\begin{align}
\rewardsum - \penaltysum \label{eq: rpsp objective function}
\end{align}
is maximized. In the example given above, this corresponds to selecting a subset of the raw materials such that the net profit defined by the profit obtained by selling the final products minus the cost for the necessary raw material packages is maximized.  For an analysis on the complexity of the RPSP and its variants, as well as algorithmic approaches, we refer to \cite{heller2021reward,heller2021phd}.

The question how ``fair prices'' for the materials could look like can be tackled by considering the problem from the viewpoint of game theory, which is the central point of this paper.
% The task to distribute the achieved profit fairly to all involved players can be seen as profit distribution. 
For this, we view each material as a player in a cooperative game with characteristic function as in~\eqref{eq: rpsp objective function}.  This leads us to \emph{reward-penalty-selection games} (RPS games) which we define formally later. 

Cooperative games are often used to find ``fair'' solutions (cf.~\cite{nisanalgorithmic}) for settings where multiple players cooperate to obtain a profit together. In general, a cooperative game consists of a group of players and a \emph{characteristic function}, which maps every subset of players to its obtainable profit. A \emph{solution} to such a cooperative game is given by a \emph{payment vector} that stores the profit for each player. Properties, that are widely considered as fair, are the following ones. We say that a payment vector fulfills \emph{efficiency} (EFF), if the obtained profit of the group of all players is fully distributed. If all entries in the vector are greater or equal to the profit the corresponding player can obtain by itself, \emph{individual rationality} (IR) is satisfied. \emph{Coalitional rationality} (CR) is fulfilled if the same holds for all subsets of the players, i.e. the sum of the payments to players is greater or equal to the obtained profit by this subset of players. The \emph{core} consists of all those payment vectors that fulfill (EFF), (IR) and (CR) (cf. \cite{nisanalgorithmic}). For the rest of the paper, we focus on core vectors, such as the \emph{Shapley value} (cf. \cite{shapley1953value}), but we note that there exists other payment vectors that are considered as fair, for instance the \emph{egalitarian allocation} (cf. \cite{dutta1989concept, koster1999weighted}). One of the main results of this paper is a characterization of the core vectors via flows in an associated network.

The remainder of the paper is structured as follows. In Section~\ref{sec: rps game} we give a definition of an RPS game. We also provide basic results regarding the properties of such games. In Section~\ref{sec: core characterization} we prove a characterization of the core elements of an RPS game as a network flow in a suitable network graph. We then conclude with a short outlook.

\section{The Reward-Penalty-Selection Game}\label{sec: rps game}

We start with a formal definition of \emph{reward-penalty-selection (RPS) games}:

\begin{definition}[Reward-Penalty-Selection Games]	
	Let $N\coloneqq\{1,\dots,n\}$ be the set of players. Further, let $\mathcal{A}\coloneqq\{A_1,\dots,A_k\}\subseteq 2^N$ be the set of non-empty reward sets with rewards $a_1,\dots,a_k\in\mathbb{N}$ and $\mathcal{B}\coloneqq\{B_1,\dots,B_l\}\subseteq 2^N$ be the set of non-empty penalty sets with penalties~$b_1,\dots,b_l\in\mathbb{N}$. The game~$(N,v)$ with characteristic function~$v\colon 2^N \rightarrow \Z$ defined by
	\begin{align*}
	v(S) \coloneqq \rewardsum - \penaltysum
	\end{align*}
	is called \emph{reward-penalty-selection game (RPS game)}. 
\end{definition}

\begin{theorem}[Convexity of RPS games]\label{thm: convexity RPS games}
	Every RPS game~$(N,v)$ is convex, i.e. the characteristic function satisfies $v(S) + v(T) \leq v(S\cup T) + v(S\cap T)$ for arbitrary subsets~$S,T\subseteq N$ of players. 
\end{theorem}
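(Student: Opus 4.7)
The plan is to split $v$ additively into its reward part $r(S) \coloneqq \rewardsum$ and its penalty part $p(S)\coloneqq \penaltysum$, so that $v = r - p$, and to verify separately that $r$ is supermodular and $p$ is submodular on $2^N$. Since $v(S)+v(T) \le v(S\cup T) + v(S\cap T)$ is equivalent to requiring $r(S)+r(T)-r(S\cup T)-r(S\cap T)\le p(S)+p(T)-p(S\cup T)-p(S\cap T)$, showing ``$\le 0$'' on the left and ``$\ge 0$'' on the right is enough. Because both $r$ and $p$ are nonnegative linear combinations over $\mathcal{A}$ and $\mathcal{B}$ respectively with weights $a_i,b_j\in\N$, it actually suffices to check the inequality for each individual reward set $A_i$ and each individual penalty set $B_j$, and then sum up.

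For the reward part, I would fix an arbitrary $A_i\in\mathcal{A}$ and perform a short case analysis on how $A_i$ lies relative to $S$ and $T$. If $A_i\subseteq S\cap T$, $A_i$ contributes $2a_i$ to both sides; if $A_i$ is contained in exactly one of $S,T$, it contributes $a_i$ to both sides; if $A_i\not\subseteq S\cup T$, it contributes $0$ to both sides. The only nontrivial case is $A_i\subseteq S\cup T$ but $A_i\not\subseteq S$ and $A_i\not\subseteq T$: then $A_i$ contributes $0$ to $r(S)+r(T)$ and $a_i$ to $r(S\cup T)+r(S\cap T)$, producing the desired supermodular inequality.

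For the penalty part, I would carry out the analogous analysis for a fixed $B_j\in\mathcal{B}$. If $B_j$ hits neither $S$ nor $T$, or hits exactly one of them, the contributions to $p(S)+p(T)$ and to $p(S\cup T)+p(S\cap T)$ are easily seen to agree. The interesting subcase is when $B_j\cap S\neq\emptyset$ and $B_j\cap T\neq\emptyset$ hold via \emph{different} elements, so that $B_j\cap(S\cap T)=\emptyset$: then $p(S)+p(T)$ receives $2b_j$ while $p(S\cup T)+p(S\cap T)$ receives only $b_j$, which is precisely the submodular direction we need. If instead $B_j\cap(S\cap T)\neq\emptyset$, both sides receive $2b_j$ and we have equality. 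Summing over $j\in\{1,\dots,l\}$ with weights $b_j\ge 0$ yields $p(S)+p(T)\ge p(S\cup T)+p(S\cap T)$.

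The main obstacle will be keeping this penalty case analysis precise, because ``hitting'' and ``covering'' behave differently under unions and intersections: the asymmetry that forces the strict inequality in the reward part to involve $S\cup T$ and in the penalty part to involve $S\cap T$ is exactly what makes $r$ supermodular while $p$ is submodular, so that together $v = r - p$ is supermodular. Combining the two per-set inequalities with nonnegative coefficients and summing gives $v(S)+v(T)\le v(S\cup T)+v(S\cap T)$ for all $S,T\subseteq N$, completing the proof.
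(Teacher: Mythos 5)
Your proof is correct and takes essentially the same route as the paper: reduce to a single reward set or a single penalty set (since a nonnegative combination of convex games is convex), then run a case analysis on how that set sits relative to $S$, $T$, $S\cup T$ and $S\cap T$. You spell out the penalty (submodular) half explicitly, which the paper dispatches with ``can be shown in the same way,'' but the underlying argument is identical.
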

\begin{proof}
	Note that it is sufficient to check the cases of a single reward or a single penalty set since the sum of convex games is again convex. Let~$(N,v)$ be an RPS game consisting of one non-empty reward set $A\subseteq N$ with reward~$a$. Further, let $S,T\subseteq N$ be arbitrary sets of players.
	
	If $v(S\cup T) = 0$, then $A$ is not contained in $S\cup T$ and therefore not either in~$S$ and~$T$. This implies that
	\begin{align*} 
	v(S\cup T) = 0 = v(S) + v(T) - v(S\cap T),
	\end{align*}
	thus, convexity is fulfilled. 
	
	If $v(S\cup T) = a$, we have to distinguish between two cases. Either $A\subseteq S\cap T$, which implies 
	\begin{align*}
	v(S\cup T) =a = a + (a-a) = v(S) + v(T) - v(S\cap T),	
	\end{align*}
	or $A \not\subseteq S\cap T$. In this case, $A$ cannot be contained in both $S$ and $T$ and therefore
	\begin{align*}
	v(S\cup T) = a = a-0 \geq v(S) + v(T) - v(S\cap T). 	
	\end{align*}
	The case of an RPS game with a single penalty set can be shown in the same way.
\end{proof}

Furthermore, by a similar argument as in the proof of convexity, we obtain the following corollary.

\begin{corollary}
	RPS games are superadditive, i.e. the characteristic function satisfies $v(T\cup S) \geq v(S) + v(T)$ for all $S,T\subseteq N$ with $S\cap T = \emptyset$.\qed
\end{corollary}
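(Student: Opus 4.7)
The plan is to deduce the statement directly from the convexity theorem just proved, rather than redo the case analysis. The only fact I need beyond convexity is that the characteristic function vanishes on the empty set.

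First I would observe that $v(\emptyset) = 0$. Indeed, $\sum_{i\colon A_i\subseteq\emptyset} a_i$ is an empty sum because every reward set $A_i$ is assumed non-empty and hence cannot be contained in $\emptyset$; likewise $\sum_{j\colon B_j\cap\emptyset\neq\emptyset} b_j$ is empty because $B_j\cap\emptyset=\emptyset$ for every $j$. So both sums are $0$, giving $v(\emptyset)=0$.

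Next I would apply Theorem~\ref{thm: convexity RPS games}: for any $S,T\subseteq N$, $v(S)+v(T)\le v(S\cup T)+v(S\cap T)$. Under the extra hypothesis $S\cap T=\emptyset$, the right-hand side becomes $v(S\cup T)+v(\emptyset)=v(S\cup T)$, which is exactly the superadditivity inequality.

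There is no real obstacle here; the only thing to be mildly careful about is that the convention $v(\emptyset)=0$ genuinely follows from the definition (which it does, given the non-emptiness assumption on reward sets in the definition of an RPS game). If one preferred a self-contained argument mirroring the proof of convexity, one could alternatively redo the case split on whether $v(S\cup T)=0$ or $v(S\cup T)=a$ (respectively $-b$) for a single reward (resp.\ penalty) set, using disjointness to conclude that $A$ cannot lie in both $S$ and $T$ simultaneously, so that $v(S)+v(T)\le v(S\cup T)$; but the one-line reduction via convexity is cleaner and is what the corollary's phrasing invites.
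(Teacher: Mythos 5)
Your proof is correct, but it takes a different route from the paper. The paper does not give a formal proof of the corollary; it states that the result follows ``by a similar argument as in the proof of convexity,'' i.e.\ it intends a rerun of the single-reward-set / single-penalty-set case analysis, adapted to the superadditivity inequality --- essentially the fallback you sketch in your last paragraph. Your primary argument instead derives superadditivity as a one-line consequence of the already-established convexity inequality $v(S)+v(T)\le v(S\cup T)+v(S\cap T)$ together with the observation that $v(\emptyset)=0$, which you correctly justify from the non-emptiness of the reward sets (no $A_i$ is contained in $\emptyset$) and the fact that $B_j\cap\emptyset=\emptyset$. This is the standard fact that a supermodular set function vanishing on the empty set is superadditive, and it buys you a shorter, non-duplicative argument; the paper's approach avoids having to make the convention $v(\emptyset)=0$ explicit but repeats the case analysis. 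Both are valid; yours is arguably the cleaner deduction given that Theorem~\ref{thm: convexity RPS games} is already in hand.
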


The main purpose of such a cooperative game is to find a fair profit distribution of the total profit among all players. For this, we define the \emph{payment vector} as the vector~$p\in\mathbb{R}^N$ whose $i$th entry~$p_i$ is the payment for player~$i$. We define the payment to a coalition~$S\subseteq N$ as the sum over the payments to the single players, i.e. $p(S) \coloneqq \sum_{i\in S} p_i$. Given such a payment vector, we now formally define desired properties as introduced above. A payment vector~$p$ fulfills \emph{efficiency} (EFF), if $\sum_{i\in N} p_i = v(N)$ holds. Furthermore, if $p_i\geq v(\{i\})$ for all~$i\in N$, we say $p$ fulfills \emph{individual rationality} (IR). The extension of (IR) to coalitions is called \emph{coalitional rationality} (CR). We say a payment vector fulfills (CR) if each coalition is guaranteed at least the value the players could obtain by themselves, i.e. $\sum_{i\in S} p_i \geq v(S)$.

Now, the \emph{core} is a well known solution to such cooperative games and defined as the set of payments, which fulfills (EFF), (IR) and (CR). Using the convexity of RPS games, it follows that an element in the core can be computed in polynomial time (cf. \cite{nisanalgorithmic}). Thus, it follows immediately:

\begin{corollary}\label{cor:poly}
  RPS games are balanced (cf. \cite{shapley1965balanced}), i.e. the core of an RPS game is never empty. A core element can be computed in polynomial time.
  \qed \end{corollary}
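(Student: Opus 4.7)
The plan is to combine the convexity established in Theorem~\ref{thm: convexity RPS games} with the well-known theorem that every convex cooperative game has a non-empty core, and that in such a game each marginal-contribution vector induced by a permutation of the players is itself an element of the core (see \cite{nisanalgorithmic}, or the classical references on convex games). This single observation handles both balancedness and constructibility at once.

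For the algorithmic part, I would fix an arbitrary ordering of the players, say $1,2,\dots,n$, and define the payment vector by
\begin{align*}
p_i \coloneqq v(\{1,\dots,i\}) - v(\{1,\dots,i-1\}) \quad \text{for } i = 1,\dots,n.
\end{align*}
By the result just cited, $p$ lies in the core of $(N,v)$. Computing $p$ requires only $n+1$ evaluations of the characteristic function, and each such evaluation reduces to iterating over the reward sets $A_1,\dots,A_k$ to check the inclusions $A_j\subseteq S$ and over the penalty sets $B_1,\dots,B_l$ to check the intersections $B_j\cap S\neq\emptyset$. Each check takes $O(n)$ time, so a single evaluation costs $O((k+l)\,n)$, and the total running time is polynomial in the size of the encoding of the underlying RPSP instance.

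The only step that requires genuine care is the invocation of the classical result that a marginal vector of a convex game is a core element; this is standard and can simply be cited. I therefore do not expect any real obstacle: the corollary is essentially a one-line consequence of Theorem~\ref{thm: convexity RPS games} combined with this structural fact, plus the trivial observation that evaluating $v$ on a given subset is polynomial in the input length.
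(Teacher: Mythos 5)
Your proposal is correct and follows essentially the same route as the paper, which likewise derives the corollary directly from Theorem~\ref{thm: convexity RPS games} together with the classical fact that convex games are balanced and admit efficiently computable core elements (the paper simply cites this and marks the corollary \qed). Your explicit marginal-vector construction and the accounting of the cost of evaluating $v$ just spell out the details the paper leaves to the citation.
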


The following two lemmas present structural insights into RPS games.

\begin{lemma}
	RPS games are totally balanced.
\end{lemma}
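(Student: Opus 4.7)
The plan is to reduce total balancedness to the balancedness result already proved in Corollary~\ref{cor:poly}. Recall that a cooperative game $(N,v)$ is \emph{totally balanced} if every subgame $(S,v|_{2^S})$ with $S\subseteq N$ has a non-empty core. Since RPS games are balanced by Corollary~\ref{cor:poly} (being convex, hence balanced), it suffices to show that the class of RPS games is closed under taking subgames: given any $S\subseteq N$, the restricted game $(S,v|_{2^S})$ is again an RPS game, and therefore has a non-empty core.

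The key step is thus to exhibit, for an arbitrary $S\subseteq N$, a set of reward sets and a set of penalty sets on the ground set $S$ whose induced RPS characteristic function coincides with $v|_{2^S}$. For any $T\subseteq S$, a reward set $A_i$ satisfies $A_i\subseteq T$ iff $A_i\subseteq S$ and $A_i\subseteq T$, so reward sets with $A_i\not\subseteq S$ contribute nothing to $v(T)$ and can be discarded; the remaining ones are kept unchanged with their rewards $a_i$. For the penalty sets, observe that for $T\subseteq S$ one has $B_j\cap T=(B_j\cap S)\cap T$, so replacing each $B_j$ by $B_j\cap S$ preserves the value of the penalty term, and penalty sets with $B_j\cap S=\emptyset$ can be dropped. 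This yields an RPS instance on $S$ whose characteristic function agrees with $v|_{2^S}$ on all $T\subseteq S$.

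Once this closure property is established, the conclusion is immediate: the subgame on $S$ is an RPS game, hence convex by Theorem~\ref{thm: convexity RPS games}, hence balanced by Corollary~\ref{cor:poly}, so its core is non-empty. As $S\subseteq N$ was arbitrary, $(N,v)$ is totally balanced.

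I do not anticipate a genuine obstacle here; the only point requiring a moment of care is the verification that reducing each penalty set $B_j$ to $B_j\cap S$ and discarding empty ones gives exactly the same values as $v|_{2^S}$, and analogously that only reward sets contained in $S$ need to be retained. Everything else is a direct appeal to the previously proved balancedness.
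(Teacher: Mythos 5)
Your proposal is correct and follows the same route as the paper: the paper's proof simply asserts that each subgame of an RPS game is itself an RPS game and then invokes balancedness, while you additionally spell out the (easy but worth stating) verification of that closure property by restricting the reward and penalty sets to $S$. No issues.
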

\begin{proof}
	Each subgame of an RPS game is itself an RPS game. Thus, since each RPS game is balanced, so are its subgames and therefore RPS games are totally balanced.
\end{proof}

\begin{lemma}
  Suppose we are given an RPS game where all reward and penalty sets consist of exactly one player. Then the core consists of a singleton.  \end{lemma}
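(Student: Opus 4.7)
The plan is to observe that the singleton assumption collapses the game to an additive (inessential) one, and then invoke the standard fact that additive games have a single core element determined by the individual values.

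First I would unfold the definitions. Writing $A_i=\{\alpha_i\}$ and $B_j=\{\beta_j\}$ for the (unique) element of each reward resp.\ penalty set, the conditions $A_i\subseteq S$ and $B_j\cap S\neq\emptyset$ both simplify to $\alpha_i\in S$ resp.\ $\beta_j\in S$. Hence
\begin{align*}
v(S) \;=\; \sum_{i:\alpha_i\in S} a_i \;-\; \sum_{j:\beta_j\in S} b_j \;=\; \sum_{k\in S}\Bigl(\sum_{i:\alpha_i=k} a_i \,-\, \sum_{j:\beta_j=k} b_j\Bigr) \;=\; \sum_{k\in S} v(\{k\}),
\end{align*}
so the characteristic function is additive over the players.

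Next I would pin down the core. Let $p$ be any core element. Individual rationality gives $p_k\geq v(\{k\})$ for each $k\in N$, and summing these inequalities yields $\sum_{k\in N} p_k \geq \sum_{k\in N} v(\{k\}) = v(N)$. Efficiency asserts $\sum_{k\in N} p_k = v(N)$, so the sum of the non-negative slacks $p_k-v(\{k\})$ is zero. Consequently $p_k=v(\{k\})$ for every $k$, and the vector $p=(v(\{1\}),\dots,v(\{n\}))$ is therefore the unique candidate.

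Finally I would check that this candidate indeed lies in the core: efficiency and individual rationality are immediate, and coalitional rationality follows since $p(S)=\sum_{k\in S} v(\{k\}) = v(S)$ by the additive decomposition above. There is no real obstacle here; the only step that requires care is verifying the decomposition, which hinges on the fact that for singleton sets the covering condition coincides with the hitting condition.
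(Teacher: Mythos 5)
Your proof is correct and follows essentially the same route as the paper: sum the individual rationality lower bounds, note that additivity makes this sum equal $v(N)$, and conclude from efficiency that every slack vanishes. You additionally verify that the unique candidate actually lies in the core (the paper leaves this implicit, relying on its earlier balancedness result), which is a small but welcome bit of extra rigor.
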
 \begin{proof}
	Suppose $|A_i| = |B_j| = 1$ for all~$i$, $j$. Further, let $p$ be a core vector. Hence, $p$ has to satisfy $p_k \geq v(\{k\}) = \sum_{i:\{k\} = A_i} a_i - \sum_{j: \{k\} = B_j} b_j$. Since a core vector also has to fulfill $p(N) = v(N)$, we get
	\begin{align*}
	\sum_{k\in N} \left( \sum_{i:\{k\} = A_i} a_i - \sum_{j: \{k\} = B_j} b_j \right)  &\leq \sum_{k\in N} p_k \\
	& = p(N)\\
	& = v(N)\\
	& = \sum_{k\in N} \left( \sum_{i:\{k\} = A_i} a_i - \sum_{j: \{k\} = B_j} b_j \right). 
	\end{align*}
	Thus, $p(N) = \sum_{k\in N} \left( \sum_{i:\{k\} = A_i} a_i - \sum_{j: \{k\} = B_j} b_j \right)$ is the only core vector.
\end{proof}

A famous core payment is the \emph{Shapley value} (cf. \cite{shapley1953value}). Formally, for a cooperative game~$(N,v)$ the Shapley value of a player~$k$ is defined as
\begin{align*}
	\phi_v(k) \coloneqq \sum_{S\subseteq N\backslash\{k\}} \frac{|S|!(n - |S| - 1)!}{n!}(v(S\cup\{k\}) - v(S)).
\end{align*}
This can be interpreted as the average marginal profit the player~$k$ adds to an existing coalition~$S$ when joining the coalition over all possible permutations. The next theorem states that the Shapley value can be computed efficiently for RPS games. 

\begin{theorem}\label{thm: characterization of RPS games}
	Given an RPS game, the Shapley value~$\phi_v$ for a player~$k$ is given by
	\begin{align}
	\phi_v(k) \coloneqq \sum_{i\in N} \frac{a_i}{|A_i|} - \sum_{j\in N} \frac{b_j}{|B_j|} \label{eq: shapley value}
	\end{align}
	for $k=1,\dots, n$ and, thus, can be computed efficiently.
\end{theorem}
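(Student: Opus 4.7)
The plan is to exploit the linearity of the Shapley value in the characteristic function, so that the problem reduces to computing the Shapley value of each ``atomic'' reward game and each ``atomic'' penalty game separately and then summing the contributions. Concretely, I would write
\[
v \;=\; \sum_{i=1}^{k} v^{A}_{i} \;-\; \sum_{j=1}^{l} v^{B}_{j},
\]
where $v^{A}_{i}(S) \coloneqq a_i$ if $A_i \subseteq S$ and $0$ otherwise, and $v^{B}_{j}(S) \coloneqq b_j$ if $B_j \cap S \neq \emptyset$ and $0$ otherwise. By the additivity axiom, $\phi_{v} = \sum_{i} \phi_{v^{A}_{i}} - \sum_{j} \phi_{v^{B}_{j}}$, so it suffices to handle the two atomic cases.

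For a single reward set $A_i$: the game $v^{A}_{i}$ is exactly $a_i$ times the unanimity game on $A_i$. Any player $k \notin A_i$ is a dummy, since adding $k$ to $S$ does not change whether $A_i \subseteq S$; hence $\phi_{v^{A}_{i}}(k) = 0$ by the dummy axiom. All players in $A_i$ are pairwise symmetric and efficiency forces them to share the grand-coalition value $a_i$ equally, giving $\phi_{v^{A}_{i}}(k) = a_i/|A_i|$ for $k \in A_i$. For a single penalty set $B_j$, consider $-v^{B}_{j}$: a player $k \notin B_j$ is again a dummy (its presence does not affect $B_j \cap S \neq \emptyset$), players in $B_j$ are pairwise symmetric, and efficiency forces their shares to sum to $-b_j$, so $\phi_{v^{B}_{j}}(k) = b_j/|B_j|$ for $k \in B_j$ and $0$ otherwise. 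Summing the atomic contributions yields
\[
\phi_{v}(k) \;=\; \sum_{i:\, k \in A_i} \frac{a_i}{|A_i|} \;-\; \sum_{j:\, k \in B_j} \frac{b_j}{|B_j|},
\]
which is exactly the formula claimed in \eqref{eq: shapley value} (with the ranges of $i$ and $j$ understood as the reward/penalty sets containing $k$). Polynomial-time computability is immediate: for each $k$ one simply scans the reward and penalty sets and accumulates the corresponding fractions in $\O(|N|\cdot(|\mathcal{A}| + |\mathcal{B}|))$ time overall.

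I do not anticipate a serious obstacle; the only subtle point is the penalty case, where $v^{B}_{j}$ is not a unanimity game in the standard sense, so one must verify carefully that outside-players are genuine dummies (marginal contribution $0$ for every $S$) rather than appeal to a known closed form. If one preferred to avoid invoking the axiomatic characterization, the same result could alternatively be obtained by directly evaluating the defining sum $\phi_v(k) = \sum_{S} \frac{|S|!(n-|S|-1)!}{n!}(v(S\cup\{k\}) - v(S))$ on each atomic game via a standard counting of how many orderings place $k$ last among the members of $A_i$ (respectively $B_j$), which again yields the factor $1/|A_i|$ (respectively $1/|B_j|$).
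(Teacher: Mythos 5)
Your proof is correct, and it shares the paper's essential decomposition --- both arguments treat each reward set and each penalty set as an independent additive contribution to the Shapley value --- but the way you evaluate each atomic contribution is genuinely different. The paper argues informally via the random-order interpretation: a player $k\in A_i$ contributes $a_i$ exactly when she arrives last among the members of $A_i$, which happens in a $\frac{1}{|A_i|}$ fraction of orderings, and symmetrically a player $k\in B_j$ triggers the penalty exactly when she arrives first among the members of $B_j$. You instead make the linearity explicit by writing $v=\sum_i v^A_i-\sum_j v^B_j$ and then invoke the additivity, dummy/null-player, symmetry, and efficiency axioms to pin down each $\phi_{v^A_i}$ and $\phi_{v^B_j}$. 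Your route is the more rigorous of the two: the paper's one-paragraph probabilistic sketch never states the linearity step or verifies that players outside a set are null players, whereas you check both, and you rightly flag that $v^B_j$ is not a unanimity game (it is $b_j$ times the \emph{dual} unanimity game on $B_j$), so the dummy property must be verified directly rather than quoted. Your closing remark --- that one could instead count the orderings in which $k$ is last in $A_i$ (respectively first in $B_j$) --- is precisely the paper's argument, so the two proofs are interchangeable; yours buys rigor at the cost of invoking the axiomatic characterization, while the paper's buys brevity and intuition. You also correctly read the index sets in \eqref{eq: shapley value} as ranging over the sets containing $k$, which is what the statement must mean despite being written as sums over $i\in N$ and $j\in N$.
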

\begin{proof}
For each reward set~$A_i\in\mathcal{A}$ a player~$k$ adds value $a_i$ to the coalition value if and only if all other players of~$A_i$ are already contained in the coalition, i.e. in $\frac{1}{|A_i|}$ of all cases. Hence, on average each player in~$A_i$ contributes $\frac{a_i}{|A_i|}$ to the value of the coalition. On the other side, for each penalty set~$B_j\in\mathcal{B}$, a player~$k$ incurs cost~$b_j$ if and only if she enters the coalition first. Again, on average, each player in $B_j$ incurs cost~$\frac{b_j}{|B_j|}$ to the coalition. By summing up over all reward and penalty sets we obtain~\eqref{eq: shapley value}.
		
In order to compute the Shapley value of a player, the contribution of each player is initialized with zero. Now we iterate over all reward and penalty sets and update the contribution of each player according to~\eqref{eq: shapley value}.
\end{proof}

% We note that the computation can be done in time linear in the size of the bipartite graph with the shores $A$; $B$; $N$. and edges.. 

We now investigate the relation of RPS games to the set of convex games.

\begin{lemma}\label{lem: modeling with RPS games}
	The following statements are true:
	\begin{enumerate}[(i)]
        \item Every convex cooperative game with three players can be modeled as an RPS game.
        \item The set of all convex  cooperative games is a strict superset of the set of RPS games, meaning that there exist convex cooperative games with four players that cannot be modeled as an RPS game.
	\end{enumerate}
\end{lemma}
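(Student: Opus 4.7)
The natural framework is the Harsanyi decomposition $v = \sum_{\emptyset \neq T \subseteq N} \lambda_T u_T$, where $u_T$ is the unanimity game on $T$ and the $\lambda_T$'s are the dividends. Let $w_T$ denote the ``hit'' game $w_T(S) = 1$ if $S \cap T \neq \emptyset$ and $0$ otherwise, so the characteristic function of an RPS game equals $\sum_T a_T u_T - \sum_T b_T w_T$. Expanding $w_T$ by inclusion--exclusion yields $w_T = \sum_{\emptyset \neq K \subseteq T} (-1)^{|K|+1} u_K$, so collecting coefficients of $u_K$ produces the key identity
\[
\lambda_K \;=\; a_K + (-1)^{|K|}\!\!\sum_{T \supseteq K} b_T \qquad (\emptyset \neq K \subseteq N).
\]
Existence of an RPS representation of $v$ is therefore equivalent to feasibility of this linear system in non-negative $a_T, b_T$, and I plan to analyse both parts of the lemma through it.

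For part (i), I plan to construct a feasible solution explicitly when $n = 3$. Convexity of $v$ forces $\lambda_{\{i,j\}} \geq 0$ for every pair (supermodularity of $\{i\}$ and $\{j\}$) and $\lambda_{\{i,j\}} + \lambda_N \geq 0$ (supermodularity of $\{i,k\}$ and $\{j,k\}$, which collapses to $v(N) + v(\{k\}) - v(\{i,k\}) - v(\{j,k\}) \geq 0$). With these in hand, set $b_N := \max\{0,-\lambda_N\}$ and all pair-penalties $b_{\{i,j\}} = 0$; the identity above then forces $a_N = \lambda_N + b_N \geq 0$ and $a_{\{i,j\}} = \lambda_{\{i,j\}} - b_N \geq 0$ (the latter using the second convexity inequality precisely when $\lambda_N < 0$). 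The three residual singleton equations $a_i - b_i = \lambda_i + b_N$ are then solved by taking positive and negative parts.

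For part (ii), I would exhibit the four-player game defined by $v(S) = 0$ for $|S| \leq 2$, $v(S) = 1$ for $|S| = 3$, and $v(N) = 2$. A short case analysis verifies supermodularity; the only non-trivial situation is two triples meeting in a pair, where both sides of the supermodularity inequality evaluate to $2$. Inclusion--exclusion gives $\lambda_N = v(N) - 4 \cdot 1 + 0 - 0 = -2$. But specialising the key identity to $K = N$ with $n = 4$ (so $(-1)^{|K|} = +1$) yields $\lambda_N = a_N + b_N$, which is non-negative for any RPS instance. The contradiction rules out an RPS representation.

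The real obstacle is in part (i): the construction is essentially forced by the system, and the content is that the supermodularity inequalities of a three-player game are \emph{exactly} strong enough to place $b_N$ inside the compatibility window $[\,-\lambda_N,\,\min_{i<j} \lambda_{\{i,j\}}\,]$. Part (ii) then exploits the parity of $n$: for odd $n$ the analogous identity reads $\lambda_N = a_N - b_N$ and imposes no sign constraint, whereas for even $n$ it reads $\lambda_N = a_N + b_N$ and forces non-negativity, which is exactly where the counterexample lives.
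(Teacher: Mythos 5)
Your proposal is correct, and it reaches both conclusions by a genuinely different route than the paper. The paper proves (i) by directly writing down a penalty set $B=\{1,2,3\}$ with penalty $d=\min_{i<j}v(\{i,j\})$ together with seven explicit reward sets, and proves (ii) with a combinatorial elimination argument on the same four-player game (no penalty set of size $\geq 2$ can exist because $v(\{i,j\})=v(\{i\})+v(\{j\})$, hence no pair reward sets, hence each of the four triples needs its own unit reward set, forcing $v(N)\geq 4$). You instead organize everything through the Harsanyi dividends: the identity $\lambda_K=a_K+(-1)^{|K|}\sum_{T\supseteq K}b_T$ is correct (the inclusion--exclusion expansion of the hit game checks out), the two supermodularity inequalities $\lambda_{\{i,j\}}\geq 0$ and $\lambda_{\{i,j\}}+\lambda_N\geq 0$ are exactly the ones needed to make your choice $b_N=\max\{0,-\lambda_N\}$ feasible, and the singleton equations always split into positive and negative parts. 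For (ii), $\lambda_N=-2$ while $\lambda_N=a_N+b_N\geq 0$ for every four-player RPS game is an immediate contradiction. What your approach buys is a uniform feasibility criterion (a linear system in the nonnegative weights) rather than an ad hoc construction, a cleaner and fully rigorous version of part (ii) (the paper's ``adding more sets, the gap only increases'' step is informal), and the structural explanation via the parity of $(-1)^{|K|}$ of \emph{why} the obstruction appears only for even-sized ground sets; it also sidesteps a sign slip in the paper's formula for $a_7$ (which as printed yields $v(N)-2b$ for the grand coalition rather than $v(N)$). What the paper's version buys is elementary self-containedness: no Möbius inversion is needed. The only caveats, shared equally by the paper's own proof, are that the constructed weights are integral only when $v$ is integer-valued, and that sets receiving weight zero must simply be omitted since the definition requires rewards and penalties in $\mathbb{N}$.
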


\begin{proof}
	\begin{enumerate}[(i)]
		\item Let $(N,v)$ be a convex cooperative game with player set~$N=\{1,2,3\}$. W.l.o.g. we can assume that the value of a singleton is always zero since we can add singleton reward and penalty sets while keeping convexity. Let $d\coloneqq\min\{v(\{1,2\}), v(\{1,3\}), v(\{2,3\})\}$ be the minimal value of a coalition consisting of two players. By convexity, $d$ is non-negative. If $d$ is greater than $0$, we add a penalty set $B = \{1,2,3\}$ with penalty $b = d$. In order to obtain the amount given by the characteristic function~$v$, we add reward sets
		\begin{align*}
		A_1 = \{1\} &\text{ with } a_1 = d \\
		A_2 = \{2\} &\text{ with } a_2 = d \\
		A_3 = \{3\} &\text{ with } a_3 = d \\
		A_4 = \{1,2\} &\text{ with } a_4 = v(\{1,2\}) - d \\
		A_5 = \{2,3\} &\text{ with } a_5 = v(\{2,3\}) - d \\
		A_6 = \{1,3\} &\text{ with } a_6 = v(\{1,3\}) - d
		\end{align*}
		Note that all rewards are non-negative by the definition of~$d$. Finally, we add $A_7 = \{1,2,3\} \text{ with } a_7 = v(\{1,2,3\}) - \sum_{i=1}^{6} a_i - b$. With this, an arbitrary game consisting of three players can be modeled as an RPS game. 
		
		\item Let $(N,v)$ be the convex cooperative  game with player set~$N=\{1,2,3,4\}$ and characteristic function
		\begin{align*} 
		v(S) = \begin{cases}
		0, \qquad \quad \text{ if } |S|\leq 2, \\
		1, \qquad \quad \text{ if } |S| = 3, \\
		2, \qquad \quad \text{ if } S = N. \\
		\end{cases} 	
		\end{align*}
		Suppose $(N,v)$ can be modeled by an RPS game. Let $i,j\in N$ be two different players. It holds that $v(\{i,j\}) = 0 = v(i) + v(j)$. Since $v(\{i,j\}) \geq v(i) + v(j)$ is always true due to convexity, there cannot be a penalty set containing both $i$ and $j$, because otherwise for the coalition the penalty incurs once while the penalty incurs for both singletons~$\{i\}, \{j\}$.  
		
		Adding more sets, the gap only increases which does not help in order to obtain $v(\{i,j\}) = v(i) + v(j)$. Hence, no penalty set of size two or larger can be used to model the game. As every two player coalition receives value zero, reward sets of size exactly two cannot exist in said game at the same time. In order to generate a profit of one for all three-player coalitions, we have to add a reward set of value one for each of those coalitions. But then, since there are four such coalitions, the RPS game must grant a profit of at least four to the grand coalition --- a contradiction. 
	\end{enumerate}
\end{proof}
\section{Characterization of Core Elements}\label{sec: core characterization}
In this section we give a characterization of core elements of an instance of an RPS game. In order to do this, we define a profit sharing graph and prove that any feasible flow in this graph of a certain flow value induces a core vector and vice versa.  As a byproduct of this characterization we obtain an alternative proof to the polynomial time computability of a core element stated in Corollary~\ref{cor:poly}.  We assume that the reader is familiar with the basics of network flows~\cite{ahuja1988network}.

Our approach of a characterization of core elements as feasible flows is based on results of Ackermann et al.~(cf. \cite{ackermann2014modeling}). First, we define the \emph{profit sharing graph} for RPS games.  
\begin{definition}[Profit Sharing Graph for RPS Games]
	Let $(N,v)$ be an RPS game with player set~$N=\{1,2,\dots,n\}$, a collection of reward sets~$\mathcal{A} = \{A_1,\dots,A_k\}$ and a collection of penalty sets~$\mathcal{B} = \{B_1,\dots,B_l\}$. The \emph{profit sharing graph} for $(N,v)$ is given by the directed graph~$G=(V,E)$ with nodes
	\begin{align*} 
	V\coloneqq \{s,t,\overline{s},\overline{t}\}\cup N \cup \mathcal{A} \cup \mathcal{B},	
	\end{align*}
	and edges
	\begin{alignat*}{2}
	E \coloneqq& \{(s,A): A\in\mathcal{A}\} \cup \{(B,t): B\in\mathcal{B}\} \cup \{(s,\overline{s}), (\overline{t}, t)\} \cup \\
	&	\{(\overline{s},n): n\in N\} \cup \{(n,\overline{t}): n\in N\} \, \cup \\
	&	\{(A,i): i\in A, A\in\mathcal{A}\} \cup	\{(i,B): i\in B, B\in\mathcal{B}\} \cup \{(\overline{s}, \overline{t})\}.
	\end{alignat*}
	We set the edge capacities to be given by the function~$c\colon E\to \Z$ defined by
	\begin{align*}
	c(e) \coloneqq \begin{cases}
	a_i,& \text{for $e=(s,A_i)$ and $A_i\in\mathcal{A}$}\\
	b_j, & \text{for $e=(B_j,t)$ and $B_j\in\mathcal{B}$}\\
	\sum_{j=1}^l b_j,& \text{for $e=(s,\overline{s})$} \\
	\sum_{i=1}^k a_i, & \text{for $e=(\overline{t},t)$} \\
	\infty,& \text{otherwise.}
	\end{cases}
	\end{align*}
\end{definition}

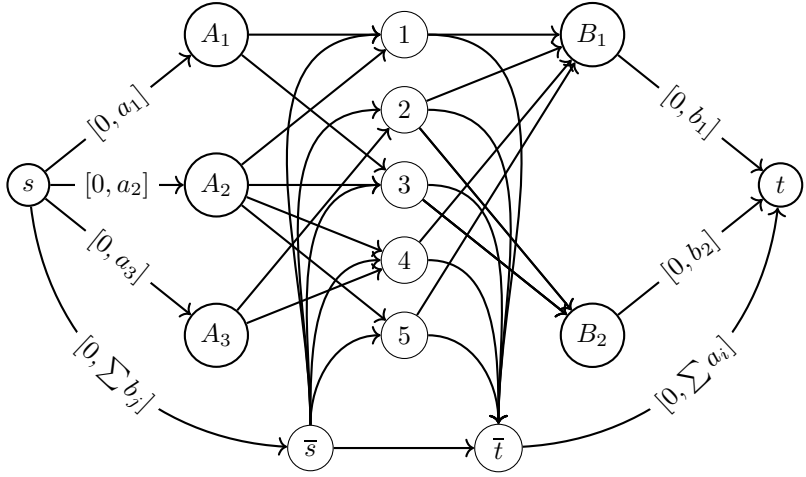
\begin{figure}
	\centering
	\begin{tikzpicture}[every node/.style={fill=white,rectangle}, every edge/.style={draw=black,very thick}]
	\begin{scope}[every node/.style={circle,thick,draw}]
	\node (s) at (0,3) {$s$};
	\node (A1) at (2.5,5) {$A_1$};
	\node (A2) at (2.5,3) {$A_2$};
	\node (A3) at (2.5,1) {$A_3$};
	
	\node (B1) at (7.5,5) {$B_1$};
	\node (B2) at (7.5,1) {$B_2$};
	
	\node (t) at (10, 3) {$t$};
	\end{scope}
	
	\begin{scope}
	[every node/.style={circle, draw}]
	\node (n1) at (5,5) {$1$};
	\node (n2) at (5,4) {$2$};
	\node (n3) at (5,3) {$3$};
	\node (n4) at (5,2) {$4$};
	\node (n5) at (5,1) {$5$};
	
	\node (s1) at (3.75, -0.5) {$\overline{s}$};
	\node (t1) at (6.25, -0.5) {$\overline{t}$};
	\end{scope}
	
	\begin{scope}[
	every node/.style={fill=white,rectangle,sloped},
	every edge/.style={draw=black,thick}]
	
	\path [->] (s) edge node {$[0,a_1]$} (A1);
	\path [->] (s) edge node{$[0,a_2]$} (A2);
	\path [->] (s) edge node{$[0,a_3]$} (A3);
	
	\path [->] (A1) edge (n1);
	\path [->] (A2) edge (n1);
	\path [->] (A2) edge (n3);
	\path [->] (A3) edge (n2);
	\path [->] (A1) edge (n3);
	\path [->] (A2) edge (n5);
	\path [->] (A2) edge (n4);
	\path [->] (A3) edge (n4);
	
	\path [->] (n2) edge (B1);
	\path [->] (n2) edge (B2);
	\path [->] (n3) edge (B2);
	\path [->] (n1) edge (B1);
	\path [->] (n2) edge (B2);
	\path [->] (n3) edge (B2);
	\path [->] (n4) edge (B1);
	\path [->] (n5) edge (B1);
	\path [->] (n3) edge (B2);
	
	\path [->] (B1) edge node{$[0,b_1]$} (t);
	\path [->] (B2) edge node{$[0,b_2]$} (t); 
	
	\path [->] (t1) edge[bend left=-40] node{$[0,\sum a_i]$} (t); 
	\path [->] (s) edge[bend left=-40] node{$[0,\sum b_j]$} (s1); 
	\path [->] (s1) edge (t1);

	\path [->] (n1) edge[out=0, in=90] (t1); 
	\path [->] (n2) edge[out=0, in=90] (t1); 
	\path [->] (n3) edge[out=0, in=90] (t1); 	
	\path [->] (n4) edge[out=0, in=90] (t1); 	
	\path [->] (n5) edge[out=0, in=90] (t1); 
	
	\path [<-] (n1) edge[out=180, in=90] (s1); 
	\path [<-] (n2) edge[out=180, in=90] (s1); 
	\path [<-] (n3) edge[out=180, in=90] (s1); 	
	\path [<-] (n4) edge[out=180, in=90] (s1); 	
	\path [<-] (n5) edge[out=180, in=90] (s1); 
	
	\end{scope}
	\end{tikzpicture}
	\caption{Example of a \emph{profit sharing graph} for $N=\{1,2,3,4,5\}$.}\label{fig: profit sharing}
\end{figure}

Furthermore, we set the lower capacity bound~$l(e)$ of each edge to~0. An example of the profit sharing graph can be found in Figure~\ref{fig: profit sharing}. It is clear by construction that any feasible $s-t-$flow in the profit sharing graph of an RPS game with value~$H\coloneqq\sum_{i:A_i\in\mathcal{A}} a_i + \sum_{j:B_j\in\mathcal{B}} b_j$ fully exhausts all finite capacities. 

% For a feasible RPS game~$(N,v)$ and the corresponding profit sharing graph~$G$, we can restrict ourselves to the \emph{feasible profit sharing graph} where we delete all edges of the form~$(\overline{s}, k)$ for~$k\in N$.

%In order to show that a feasible flow exists, we make use of the Hofmann's Circulation Theorem (cf. \cite{ahuja1988network}) which states that for a given network graph~$G_H=(V_H\cup\{s,t\},E_H)$ there exists a feasible $s-t-$flow~$f_H$ if and only if
%\begin{align} \label{eq:1}
%b(S) + l(S) \leq c(S) \quad \text{ for all subsets~$S\subseteq V$}
%\end{align}
%is fulfilled where $b(S)$ denotes the sum of the balances of nodes in $S$, $l(S)$ denotes the sum of lower capacity bounds of edges from $V\backslash S$ to $S$ and $c(S)$ denotes the sum of capacities of edges from $S$ to $V\backslash S$. Clearly, if an edge with capacity~$\infty$ lies in the cut, condition~(\ref{eq:1}) is always fulfilled. For the profit sharing graph this means that if a player node~$k$ lies in~$S$, all connected reward and penalty nodes also lie in~$S$. With this, the Hofmann condition~(\ref{eq:1}) becomes
%\begin{align*}
%& b(S) + l(S) \leq c(S) \\
%\Leftrightarrow \qquad \qquad &\sum_{i: S\cap A_i \neq \emptyset} a_i - \sum_{B_j\in\mathcal{B}} b_j \leq \sum_{A_i\in\mathcal{A}} a_i - \sum_{j:S\cap B_j \neq \emptyset} b_j \\
%\Leftrightarrow \qquad \qquad& 0 \leq \sum_{i: S\cap A_i = \emptyset} a_i - \sum_{B_j: S\cap B_j = \emptyset}b_j, 
%\end{align*}
%which is exactly the characterization of a feasible RPS game. 

Now, let $(N,v)$ be an RPS game and $G=(V,E)$ the corresponding profit sharing graph. We define the payment~$p_i$ to a player~$i$ by 
\begin{align}
p_i \coloneqq f(i,\overline{t}) - f(\overline{s}, i). \label{eq: payment by flow}
\end{align}

The next theorem shows the connection between a feasible flow in the profit sharing graph and a core vector. 
\begin{theorem}[Core Elements]\label{thm: flow induces core}
	Any feasible flow with value~$H$ in the profit sharing graph of an RPS game defines a payment vector that fulfills the properties of efficiency (EFF), coalitional rationality (CR) and individual rationality (IR).
\end{theorem}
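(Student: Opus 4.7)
The plan is to re-express each payment $p_k = f(k,\overline{t}) - f(\overline{s},k)$ via flow conservation at the player node, then establish (CR) for an arbitrary coalition $S \subseteq N$; (IR) will follow by taking $S = \{k\}$, and (EFF) by taking $S = N$.

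First I would exploit that the total capacity out of $s$ equals $\sum_{i} a_i + \sum_{j} b_j = H$, so a feasible $s$-$t$-flow of value $H$ must saturate every arc $(s,A_i)$ and, symmetrically, every arc $(B_j,t)$. Flow conservation at a reward node $A_i$ then forces $\sum_{k \in A_i} f(A_i,k) = a_i$, and at a penalty node $B_j$ forces $\sum_{k \in B_j} f(k,B_j) = b_j$. Flow conservation at a player node $k$ rewrites the payment as
\[
p_k \;=\; \sum_{i: k \in A_i} f(A_i,k) \;-\; \sum_{j: k \in B_j} f(k,B_j),
\]
since the only arcs at $k$ other than those to/from $\overline{s},\overline{t}$ go to reward and penalty set nodes.

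Next I would fix $S \subseteq N$, sum this identity over $k \in S$, and swap the order of summation to obtain
\[
\sum_{k \in S} p_k \;=\; \sum_{i}\,\sum_{k \in S \cap A_i} f(A_i,k) \;-\; \sum_{j}\,\sum_{k \in S \cap B_j} f(k,B_j).
\]
For a reward set with $A_i \subseteq S$ the inner sum equals $a_i$ by the saturation observation, whereas for $A_i \not\subseteq S$ it is only non-negative; hence the first double sum is at least $\sum_{A_i \subseteq S} a_i$. For a penalty set with $B_j \cap S \neq \emptyset$ the inner sum is at most $\sum_{k \in B_j} f(k,B_j) = b_j$, and for $B_j \cap S = \emptyset$ it vanishes; hence the second double sum is at most $\sum_{B_j \cap S \neq \emptyset} b_j$. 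Combining the two bounds yields $\sum_{k \in S} p_k \geq v(S)$, which is (CR), and the special case $S=\{k\}$ gives (IR).

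For (EFF) I would apply the same identity with $S = N$: every reward set satisfies $A_i \subseteq N$ and every (non-empty) penalty set satisfies $B_j \cap N \neq \emptyset$, so both inner sums become full saturations and the two inequalities collapse to equalities, giving $\sum_{k \in N} p_k = \sum_i a_i - \sum_j b_j = v(N)$. The only real subtlety is tracking the two opposite directions of inequality --- reward arcs are bounded from below by what reaches $S$, while penalty arcs are bounded from above by $b_j$ --- and recognising that $S=N$ is exactly where both bounds tighten simultaneously, forcing the efficiency equality.
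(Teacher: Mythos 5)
Your proposal is correct and follows essentially the same route as the paper's proof: both rely on the observation that a value-$H$ flow saturates all finite-capacity arcs, use flow conservation at the player nodes to rewrite $p_k$ as the net flow from reward nodes minus the flow to penalty nodes, and then bound the reward contribution from below and the penalty contribution from above to get (CR), with (IR) and (EFF) as special cases. The only cosmetic difference is that you obtain efficiency as the tight case $S=N$ of the coalitional-rationality computation, whereas the paper verifies it separately.
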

\begin{proof}
	We prove the claim by showing that a payment vector~$p$ defined by a feasible flow~$f$ with value~$H$ fulfills the above conditions, where the payment of a player~$i$ is is defined by \eqref{eq: payment by flow}. Recall that a feasible flow fully exhausts all finite capacities of the profit sharing graph.	
	\begin{enumerate}[(i)]
		\item Efficiency: For efficiency we need to show $p(N) = v(N)$. By definition we have 
		\begin{align*}
		p(N) 	&= \sum_{i\in N} p_i = \sum_{i\in N} f(i,\overline{t}) - f(\overline{s},i).
		\end{align*}
		By using flow conservation at each of the player nodes~$i$ this is equal to
		\begin{align*}
		\sum_{i\in N} \left(f(\mathcal{A}, i) - f(i,\mathcal{B})\right)	& = \sum_{i: A_i\in\mathcal{A}} a_i - \sum_{j: B_j\in\mathcal{B}} b_j = v(N).
		\end{align*}
		
		% We have seen above that there always exists a solution in the feasible profit sharing graph for feasible RPS games, i.e. $p_i\geq 0$ for all $i\in N$. Note that this is not true in general since the feasible profit sharing graph only has a solution for a feasible RPS game.
		
		\item Coalitional Rationality: Let $S\subseteq N$ be a subset of players. We want to show that the profit distributed to this subgroup~$S$ is greater or equal to the value of $S$, i.e. $p(S) \geq v(S)$. By definition we have
		\begin{align*}
		p(S) &= \sum_{i\in S} p_i = \sum_{i\in S} \left(f(i,\overline{t}) - f(\overline{s},i)\right)
		\end{align*}
		By using flow conservation this is equal to
		\begin{align*}
		\sum_{A\in\mathcal{A}}\sum_{i\in S} f(A,i) - \sum_{B\in\mathcal{B}}\sum_{i\in S} f(i,B)
		& = \sum_{i: A_i\cap S\neq \emptyset} f(A_i,S) - \sum_{j: B_j\cap S \neq \emptyset} f(S,B_j)\\
		& \geq \sum_{i: A_i\subseteq S} f(A_i,S) - \sum_{j: B_j\cap S \neq \emptyset} f(S,B_j)\\
		& = \rewardsum - \penaltysum \\
		& = v(S),
		\end{align*}
		since every feasible flow with value~$H$ fully exhausts all finite capacities. 
		\item Individual Rationality: Follows by the same argumentation as for (CR).
	\end{enumerate}
	Thus, all four properties are fulfilled.
\end{proof}

Conversely, the next theorem shows that given a core vector, one finds a corresponding feasible flow. 
\begin{theorem}[Core Elements II]\label{thm: core induces flow}
	Let $p$ be a core allocation of an RPS game~$(N,v)$. Then, there exists a feasible flow~$f$ with value~$H$ in the corresponding profit sharing graph that induces the allocation~$p$. 
\end{theorem}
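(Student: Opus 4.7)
The plan is to construct $f$ explicitly by first fixing every ``forced'' edge value and then reducing the remaining freedom to a single max-flow feasibility problem on a small auxiliary graph, which will be settled by coalitional rationality of $p$ via max-flow/min-cut.

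First I would saturate all finite outer capacities by setting $f(s, A_i) = a_i$, $f(B_j, t) = b_j$, $f(s, \overline{s}) = \sum_j b_j$, $f(\overline{t}, t) = \sum_i a_i$, and I would encode the payment directly by $f(i, \overline{t}) = \max\{0, p_i\}$ and $f(\overline{s}, i) = \max\{0, -p_i\}$, so that $p_i = f(i, \overline{t}) - f(\overline{s}, i)$ by construction. Conservation at $\overline{s}$ then forces $f(\overline{s}, \overline{t}) = \sum_j b_j - \sum_{i: p_i < 0}(-p_i)$; this value is non-negative by coalitional rationality applied to $T = \{i : p_i < 0\}$, since $p(T) \geq v(T) \geq -\sum_{j: B_j \cap T \neq \emptyset} b_j \geq -\sum_j b_j$. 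Efficiency of $p$ then makes conservation at $\overline{t}$ automatic.

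What remains is to produce non-negative values $x_{A,i}$ on $(A, i)$ and $y_{i, B}$ on $(i, B)$ satisfying $\sum_{i \in A} x_{A, i} = a_A$ for each reward set $A$, $\sum_{i \in B} y_{i, B} = b_B$ for each penalty set $B$, and (by conservation at each player node) $\sum_{A \ni i} x_{A, i} - \sum_{B \ni i} y_{i, B} = p_i$ for each player $i$. I would recast this as a max-flow problem in an auxiliary graph $G'$ built from the reward-player-penalty subgraph of the profit sharing graph by adjoining a new source $s'$ and sink $t'$ together with edges $(s', A)$ of capacity $a_A$, edges $(B, t')$ of capacity $b_B$, an edge $(i, t')$ of capacity $p_i$ whenever $p_i > 0$, and an edge $(s', i)$ of capacity $-p_i$ whenever $p_i < 0$. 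Attaining the target flow value $\sum_A a_A + \sum_{i: p_i < 0}(-p_i)$ saturates every bounded capacity in $G'$ and enforces exactly the three identities above.

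The main obstacle is showing that $G'$ admits a flow of this target value; I would do this via max-flow/min-cut. For any finite-capacity $s'$-$t'$ cut $(S, \overline{S})$, the infinite-capacity edges force $S \cap \mathcal{A} \subseteq \{A : A \subseteq S_N\}$ and $S \cap \mathcal{B} \supseteq \{B : B \cap S_N \neq \emptyset\}$, where $S_N \coloneqq S \cap N$; minimising over such cuts with $S_N$ held fixed gives a capacity of
\[ \sum_A a_A - \sum_{A \subseteq S_N} a_A + \sum_{B \cap S_N \neq \emptyset} b_B + \sum_{i \in S_N,\, p_i > 0} p_i + \sum_{i \notin S_N,\, p_i < 0}(-p_i). \]
After subtracting the target value, the required inequality ``capacity $\geq$ target'' collapses cleanly to $p(S_N) \geq v(S_N)$, which is precisely coalitional rationality of $p$ for the coalition $S_N$. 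Hence the max flow in $G'$ attains the target, and together with the forced values above this yields a feasible flow of value $H$ in the profit sharing graph that induces the given core allocation $p$.
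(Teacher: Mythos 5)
Your proof is correct, and it rests on the same two pillars as the paper's: the max-flow/min-cut theorem applied to a network whose player-incident capacities encode $p$, with coalitional rationality supplying the decisive cut inequality. The decomposition is genuinely different, though. The paper keeps the full profit sharing graph, replaces the infinite capacities on $(\overline{s},i)$, $(i,\overline{t})$ and $(\overline{s},\overline{t})$ by finite ones derived from $p$, and then bounds every $s$-$t$-cut from below by $H$ via a four-way case analysis on whether $\overline{s}$ and $\overline{t}$ lie on the source side; saturation of the finite capacities is argued afterwards. You instead pin down all forced edge values first (outer edges, player--$\overline{s}$/$\overline{t}$ edges, and $(\overline{s},\overline{t})$, whose non-negativity you correctly justify by (CR) on $\{i: p_i<0\}$), and reduce the residual freedom to a single transportation-type feasibility problem on an auxiliary graph $G'$; there the cut analysis is one uniform computation parametrized by $S_N$, and the target-value bookkeeping (out-capacity of $s'$ equals the target, in-capacity of $t'$ equals it by efficiency) makes saturation automatic. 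Your version buys a shorter, less error-prone cut argument; the paper's buys a statement directly about cuts of the original profit sharing graph. One further point in your favour: your assignment $f(i,\overline{t})=\max\{0,p_i\}$, $f(\overline{s},i)=\max\{0,-p_i\}$ is the one consistent with the paper's definition $p_i = f(i,\overline{t}) - f(\overline{s},i)$ in \eqref{eq: payment by flow}, whereas the capacities $c_p$ in the paper's proof appear to have the roles of $(\overline{s},i)$ and $(i,\overline{t})$ interchanged relative to that convention.
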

\begin{proof}
	Let $G=(V,E)$ be the corresponding profit sharing graph of~$(N,v)$. We modify~$G$ by changing the infinite capacities to finite ones. Define $c_p$ as follows:
	\begin{align*}
	c_p(\overline{s},i) &= \begin{cases}
	p_i & \text{if $p_i \geq 0$} \\
	0 & \text{otherwise,}
	\end{cases}\\
	c_p(i, \overline{t}) &= \begin{cases}
	-p_i & \text{if $p_i < 0$} \\
	0 & \text{otherwise}
	\end{cases}\\
	\intertext{and}
	c_p(\overline{s},\overline{t}) &= \sum_{B_j\in\mathcal{B}}b_j - \sum_{i\in N} c(\overline{s}, i).
	\end{align*}
	We now define finite capacities $\operatorname{cap}\colon E\rightarrow\Z$ by
	\begin{align*}
	\operatorname{cap}(e) \coloneqq 
	\begin{cases}
	c(e), &\text{if $c(e)<\infty$}\\
	c_p(e), & \text{if $c(e)=\infty$.}
	\end{cases}
	\end{align*}
	With these finite capacities, the payment from or to a player~$i$ made by $p$ is representable in the flow network since $c_p(\overline{s},i) + c_p(i, \overline{t}) = |p_i|$ and $c_p(\overline{s}, i) - c_p(i,\overline{t}) = p_i$ always hold true.
	
	By the Max-Flow Min-Cut Theorem (cf. \cite{ahuja1988network}), the capacity of a minimum $s$-$t$-cut in $G$ is equal to the value of a feasible $s$-$t$-flow in~$G$. Therefore we show in the following that the capacity of a minimum cut is at least~$H$. Thus, let $X\subseteq V$ be an $s$-$t$-cut~in $G$, i.e. $s\in X$ and $t\notin X$. Since $X'=\{s\}$ is a cut with finite capacity, we know that all edges connecting~$X$ and $V\setminus X$ have finite capacity. That means for an arbitrary player~$i$ in $N\cap X$ all reward and penalty sets that contain~$i$ belong also to~$X$. Further, for any reward or penalty set that are contained in~$X$, all its set members are also contained in~$X$. With this, we get
	\begin{align*}
	\mathcal{A}\cap X = \{\,A: A\cap X \neq \emptyset, A\in\mathcal{A}\,\} = \{\,A: A\cap N \cap X \neq \emptyset, A\in\mathcal{A}\,\}
	\intertext{and}
	\mathcal{A}\backslash X = \{\,A: A\cap X = \emptyset, A\in\mathcal{A}\,\} = \{\,A: A\cap N \cap X = \emptyset, A\in\mathcal{A}\,\}.	
	\end{align*} 
	
	There are four possibilities depending on whether $\overline{s}\in X$ or $\overline{t}\in X$.
	\begin{enumerate}[(i)]
		\item Assume $\overline{s},\overline{t}\notin X$. Then, we get
		\begin{align*}
		\sum_{i\in N\cap X} \left( c_p(\overline{s}, i) + c_p(i,\overline{t}) \right) &= \sum_{i\in N\cap X} |p_i| \\
		&\geq \sum_{i\in N\cap X} p_i = p(N\cap X).
		\end{align*}
		By coalitional rationality, this is greater or equal to $v(N\cap X)$. Thus,
		\begin{align*}
		v(N\cap X) &= \sum_{i: A_i\subseteq N\cap X} a_i - \sum_{j: B_j\cap N\cap X \neq \emptyset} b_j \\
		& =  \sum_{i: A_i\cap N\cap X} a_i - \sum_{j: B_j\cap N\cap X \neq \emptyset} b_j \\
		& = \sum_{i: A_i\in \mathcal{A}\cap X} a_i - \sum_{j: B_j\in \mathcal{B}\cap X \neq \emptyset} b_j
		\end{align*}
		With the above calculation, we get as the capacity of the cut~$X$:
		\begin{align*}
		\operatorname{cap}(X) & = c(s,\overline{s}) + \sum_{i:A_i \in \mathcal{A}\backslash X} c(s,A_i) + \sum_{j:B_j\in \mathcal{B}\cap X} c(B_j,t) \\
		& \qquad \qquad \qquad + \sum_{i\in N\cap X} \left(c_p(\overline{s}, i) + c_p(i, \overline{t})\right) \\
		&\geq \sum_{1\leq j \leq |\mathcal{B}|} b_j + \sum_{i:A_i \in \mathcal{A}\backslash X} a_i + \sum_{j:B_j\in \mathcal{B}\cap X} b_j \\
		& \qquad \qquad \qquad + \left( \sum_{i: A_i\in \mathcal{A}\cap X} a_i - \sum_{j: B_j\in \mathcal{B}\cap X \neq \emptyset} b_j \right) \\
		& = \sum_{1\leq j \leq |\mathcal{B}|}b_j + \sum_{1\leq i \leq |\mathcal{A}|} a_i = H.
		\end{align*}
		\item Assume $\overline{s},\overline{t}\in X$.  Then
		\begin{align*}
		\sum_{i\in N\backslash X} \left( c_p(\overline{s}, i) + c_p(i,\overline{t}) \right) &= \sum_{i\in N\backslash X} |p_i| \\
		&\geq \sum_{i\in N\backslash X} -p_i \\
		& = -p(N\backslash X)\\
		& = -p(N) + p(N\cap X) \\
		& \geq -v(N) + v(N\cap X) \\
		& = \sum_{1\leq j \leq |\mathcal{B}|} b_j - \sum_{1\leq i \leq |\mathcal{A}|}a_i + \sum_{i: A_i\cap N\cap X} a_i\\
		& \qquad \qquad \qquad - \sum_{j: B_j\cap N\cap X \neq \emptyset} b_j \\
		& = \sum_{j: B_j\cap N\cap X = \emptyset} b_j - \sum_{i: A_i\cap (N\backslash X) \neq \emptyset} a_i \\
		& \sum_{j:B_j\in \mathcal{B}\backslash X} b_j - \sum_{i: A_i\in \mathcal{A}\backslash X} a_i.
		\end{align*}
		With this, we get
		\begin{align*}
		\operatorname{cap}(X) &= c(\overline{t},t) + \sum_{i:A_i \in \mathcal{A}\backslash X} c(s,A_i) + \sum_{j:B_j\in \mathcal{B}\cap X} c(B_j,t) \\
		& \qquad \qquad \qquad + \sum_{i\in N\backslash X} \left(c_p(\overline{s}, k) + c_p(i, \overline{t})\right) \\
		&= \sum_{1\leq i \leq |\mathcal{A}|}a_i + \sum_{i: A_i\in \mathcal{A}\backslash X}a_i + \sum_{j:B_j\in \mathcal{B}\cap X} + \sum_{i\in N\cap X} \left(c_p(\overline{s},i) + c_p(i,\overline{t})\right)\\
		&\geq \sum_{1\leq i \leq |\mathcal{A}|}a_i + \sum_{i: A_i\in \mathcal{A}\backslash X}a_i + \sum_{j:B_j\in \mathcal{B}\cap X} + \left(\sum_{j:B_j\in \mathcal{B}\cap X} b_j - \sum_{i: A_i\in \mathcal{A}\cap X} a_i\right)\\
		& = \sum_{1\leq i \leq |\mathcal{A}|}a_i + \sum_{1\leq j \leq |\mathcal{B}|} b_j = H.
		\end{align*}
		\item Assume $\overline{s}\in X$ and $\overline{t}\notin X$. Thus, the following holds by similar calculations as above.
		\begin{align*}
		\operatorname{cap}(X) &= c(\overline{s},\overline{t}) + \sum_{i: A_i\in \mathcal{A}\backslash X}c(s,A_i) + \sum_{j:B_j\in \mathcal{B}\cap X}c(B_j,t) \\
		& \qquad \qquad \qquad + \sum_{i\in N\cap X}c_p(i,\overline{t}) + \sum_{i\in N\backslash X}c_p(\overline{s},i)\\
		&= \sum_{1\leq j \leq |\mathcal{B}|} b_j - \sum_{i\in N} c_p(\overline{s},i) + \sum_{i: A_i\in \mathcal{A}\backslash X}a_i + \sum_{j:B_j\in \mathcal{B}\cap X}b_j \\
		& \qquad \qquad \qquad + \sum_{i\in N\cap X}c_p(i,\overline{t}) + \sum_{i\in N\backslash X}c_p(\overline{s},i)\\
		&= \sum_{1\leq j \leq |\mathcal{B}|}b_j + \sum_{i: A_i\in \mathcal{A}\backslash X}a_i + \sum_{j:B_j\in \mathcal{B}\cap X}b_j + \sum_{i\in N\cap X}c_p(i,\overline{t}) \\
		& \qquad \qquad \qquad - \sum_{i\in N\cap X} c_p(\overline{s},i) \\
		&= \sum_{1\leq j \leq |\mathcal{B}|}b_j + \sum_{i: A_i\in \mathcal{A}\backslash X}a_i + \sum_{j:B_j\in \mathcal{B}\cap X}b_j + \sum_{i\in N\cap X} p_i \\
		&\geq \sum_{1\leq j \leq |\mathcal{B}|}b_j + \sum_{i: A_i\in \mathcal{A}\backslash X}a_i + \sum_{j:B_j\in \mathcal{B}\cap X}b_j \\
		& \qquad \qquad \qquad + \left(\sum_{i: A_i\in \mathcal{A}\cap X}a_i - \sum_{j:B_j\in \mathcal{B}\cap X}b_j\right) \\
		&= \sum_{1\leq j \leq |\mathcal{B}|}b_j + \sum_{1\leq i \leq |\mathcal{A}|} = H.
		\end{align*}
		\item Assume $\overline{s}\notin X$ and $\overline{t}\in X$. In this case, the only outgoing edges with finite capacities are $(s,\overline{s})$ and $(\overline{t},t)$, and thus we get
		\begin{align*}
		\operatorname{cap}(X) \geq c(s,\overline{s}) + c(\overline{t},t) = \sum_{1\leq i \leq |\mathcal{A}|} a_i + \sum_{1\leq j \leq |\mathcal{B}|} b_j = H.
		\end{align*}
	\end{enumerate}
	Thus, a minimum $s$-$t$-cut of $G$ has at least capacity~$H$. Since there exists a cut with capacity~$H$, this bound is tight. We still need to show that the finite capacities on outgoing edges from $\overline{s}$ and ingoing edges to $\overline{t}$ are fully exhausted. Suppose not, then there exists at least one edge which is not fully exhausted. We prove this by case distinction. Suppose an edge~$(\overline{s}, i)$ from $\overline{s}$ to a player node~$i$ has flow value~$f_i$ strict less than $p_i$. Then, the capacity on outgoing edges from $s$ and $\overline{s}$ is given by $\sum_{A_i\in\mathcal{A}} a_i + \sum_{B_j\in\mathcal{B}} b_j - \sum_{i\in N}p_i$. Since we assumed at least one edge capacity is not fully exhausted, we obtain $\sum_{i\in N} p_i  < \sum_{A_i\in\mathcal{A}} a_i + \sum_{B_j\in\mathcal{B}} b_j$, which is a contradiction to $p$ being a core vector. Thus, in this case all the edges are fully exhausted. 
	
	The case for ingoing edges to $\overline{t}$ follows analogously. In total, this shows that a feasible flow~$f$ exists that induces the given payment~$p$.
\end{proof}

With the equivalence between a core vector and a feasible flow in the profit sharing graph, the problem of finding a such a core vector can now be done in polynomial time using any polynomial time maximum flow algorithm (cf.~\cite{ahuja1988network} for an overview of flow algorithms). We summarize this in the next theorem.

\begin{theorem}\label{thm: core element characterization}
	A core element for an RPS game can be computed in polynomial time. $\Box$
\end{theorem}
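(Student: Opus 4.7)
The plan is to assemble the pieces already in hand. By Corollary~\ref{cor:poly} the core of any RPS game is non-empty, and by Theorem~\ref{thm: core induces flow} every core allocation is realised as a feasible $s$-$t$-flow of value~$H$ in the profit sharing graph. Conversely, by Theorem~\ref{thm: flow induces core}, any feasible flow of value~$H$ yields a core allocation via $p_i = f(i,\overline{t}) - f(\overline{s},i)$. Hence computing a core element reduces to computing a maximum $s$-$t$-flow in the profit sharing graph and reading off the payments.

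First, I would argue that the profit sharing graph~$G=(V,E)$ has size polynomial in the input: it has $4 + n + k + l$ nodes and $O(n + k + l + \sum_i |A_i| + \sum_j |B_j|)$ edges, all of which is bounded by the encoding length of the RPS instance. All edge capacities are either infinite or bounded by $\max\{a_i, b_j, \sum_i a_i, \sum_j b_j\}$, so they are encoded in polynomial length as well. Constructing~$G$ is therefore straightforward and polynomial.

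Next, I would observe that the value of a maximum $s$-$t$-flow in $G$ is exactly~$H$. The upper bound is immediate: the edges leaving~$s$ are $(s,\overline{s})$ of capacity $\sum_j b_j$ together with $(s,A_i)$ of capacity $a_i$, summing to $H$. For the matching lower bound, pick any core vector~$p$, which exists by Corollary~\ref{cor:poly}; by Theorem~\ref{thm: core induces flow} it is induced by a feasible flow of value~$H$. Thus the max-flow value equals~$H$.

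Finally, I would run any polynomial-time maximum flow algorithm (e.g.\ one of those surveyed in~\cite{ahuja1988network}) on~$G$, obtain a maximum flow~$f$ of value~$H$, and define $p_i \coloneqq f(i,\overline{t}) - f(\overline{s},i)$. By Theorem~\ref{thm: flow induces core} this $p$ satisfies (EFF), (IR) and (CR), and is therefore a core element. The only subtlety worth flagging is that Theorem~\ref{thm: flow induces core} presumes the flow saturates the finite capacities; since $f$ has value $H$ which equals the total capacity leaving~$s$, flow conservation forces saturation of every finite-capacity edge, so the hypothesis is automatically met. No step requires more than polynomial work, which establishes the theorem.
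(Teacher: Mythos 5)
Your proposal is correct and follows essentially the same route as the paper: it combines non-emptiness of the core (Corollary~\ref{cor:poly}), the flow characterization of Theorems~\ref{thm: flow induces core} and~\ref{thm: core induces flow}, and a polynomial-time maximum flow computation on the polynomially sized profit sharing graph. The extra details you supply (the max-flow value equals~$H$, and a value-$H$ flow necessarily saturates all finite-capacity edges since these are exactly the edges incident to $s$ and $t$) are correct and merely make explicit what the paper leaves implicit.
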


\section{Conclusion and Outlook}
In this paper we introduced a novel class of combinatorial cooperative games, namely the \emph{reward-penalty-selection games} (RPS games) which are based on the \emph{reward-penalty-selection problem} (RPSP). We showed that an RPS game is convex and, thus, its core is always non-empty. Furthermore, we showed that RPS games are a proper subgroup of convex games. RPS games allow a polynomial computation of the Shapley value. Focusing more on solution vectors of RPS games, we gave a characterization of core elements as feasible flows in a network graph. Thus, a core element can be computed not only in polynomial time but also efficiently.

Future research is directed to find flow representations of other payment vectors that fulfill certain fairness properties. For instance, the egalitarian allocation (cf. \cite{dutta1989concept, koster1999weighted}) has the property to distribute the obtained profit in a ``most equal'' way among the players as well as being contained in the core given a convex game. This could be transformed to finding a feasible flow of a certain value with the additional property that the flow values on edges do not differ ``too much''. This problem of finding an \emph{almost equal flow} was introduced in~\cite{haese2020algorithms}.  However, at the current stage it is not clear whether such a one-to-one correspondence between almost equal flows and egalitarian allocations actually holds.
 
Altogether, due to their generality, we think that RPS games are a widely viable modeling technique for profit sharing for several underlying application settings. 

\bibliographystyle{plain}
\bibliography{references}

\begin{thebibliography}{10}

\bibitem{ackermann2014modeling}
Heiner Ackermann, Hendrik Ewe, Karl-Heinz K{\"u}fer, and Michael Schr{\"o}der.
\newblock Modeling profit sharing in combinatorial exchanges by network flows.
\newblock {\em Annals of Operations Research}, 222(1):5--28, 2014.

\bibitem{ahuja1988network}
Ravindra~K Ahuja, Thomas~L Magnanti, and James~B Orlin.
\newblock Network flows.
\newblock 1988.

\bibitem{dutta1989concept}
Bhaskar Dutta and Debraj Ray.
\newblock A concept of egalitarianism under participation constraints.
\newblock {\em Econometrica: Journal of the Econometric Society}, pages
  615--635, 1989.

\bibitem{garey1979computers}
Michael~R. Garey and David~S. Johnson.
\newblock {\em Computers and intractability}, volume 174.
\newblock freeman San Francisco, 1979.

\bibitem{haese2020algorithms}
Rebekka Haese, Till Heller, and Sven~O Krumke.
\newblock Algorithms and complexity for the almost equal maximum flow problem.
\newblock In {\em Operations Research Proceedings 2019}, pages 323--329.
  Springer, 2020.

\bibitem{heller2021reward}
T.~Heller, S.O. Krumke, and K.-H. K{\"u}fer.
\newblock The reward-penalty-selection problem.
\newblock {\em arXiv preprint arXiv:2106.14601}, 2021.

\bibitem{heller2021phd}
Till~E. Heller.
\newblock {\em Virtual Prosumer Consortia: A Game Theoretic Optimization
  Approach}.
\newblock PhD thesis, {TU} Kaiserslautern, 2021.

\bibitem{koster1999weighted}
Maurice Koster.
\newblock Weighted constrained egalitarianism in tu-games.
\newblock 1999.

\bibitem{nisanalgorithmic}
Noam Nisan, Tim Roughgarden, Eva Tardos, and Vijay~V. Vazirani.
\newblock Algorithmic game theory, 2007.
\newblock {\em Google Scholar Digital Library}.

\bibitem{shapley1953value}
Lloyd~S. Shapley.
\newblock A value for n-person games.
\newblock {\em Contributions to the Theory of Games}, 2(28):307--317, 1953.

\bibitem{shapley1965balanced}
Lloyd~S Shapley.
\newblock On balanced sets and cores.
\newblock Technical report, RAND CORP SANTA MONICA CALIF, 1965.

\end{thebibliography}

\newpage
\noindent
Till Heller\\
Department of Optimization\\
Fraunhofer ITWM, Kaiserslautern\\
Germany\\
ORCiD: 0000-0002-8227-9353\\

\noindent
Niklas Gräf\\
Sven O. Krumke\\
Optimization Research Group, Department of Mathematics\\
Technische Universit\"at Kaiserslautern, Kaiserslautern\\
Germany\\
\end{document}